\newtheorem{theorem}{Theorem}
\newtheorem{lemma}[theorem]{Lemma}
\newtheorem{corollary}{Corollary}[theorem]
\begin{document}

\title{Optimizing Ansatz Design in QAOA for Max-cut}

\author{Ritajit Majumdar}
\affiliation{Advanced Computing \& Microelectronics Unit, Indian Statistical Institute}
\email{majumdar.ritajit@gmail.com}
\author{Dhiraj Madan}
\affiliation{IBM Research, India}
\email{dmadan07@in.ibm.com}
\author{Debasmita Bhoumik}
\affiliation{Advanced Computing \& Microelectronics Unit, Indian Statistical Institute}
\author{Dhinakaran Vinayagamurthy}
\affiliation{IBM Research, India}
\author{Shesha S. Raghunathan}
\affiliation{IBM Systems, India}
\author{Susmita Sur-Kolay}
\affiliation{Advanced Computing \& Microelectronics Unit, Indian Statistical Institute}
\email{ssk@isical.ac.in}

\maketitle

\begin{abstract}
  Quantum Approximate Optimization Algorithm (QAOA) is studied primarily to find approximate solutions to combinatorial optimization problems. For a graph with $n$ vertices and $m$ edges, a depth $p$ QAOA for the Max-cut problem requires $2\cdot m \cdot p$ CNOT gates. CNOT is one of the primary sources of error in modern quantum computers. In this paper, we propose two hardware independent methods to reduce the number of CNOT gates in the circuit. First, we present a method based on Edge Coloring of the input graph that minimizes the the number of cycles (termed as depth of the circuit), and reduces upto $\lfloor \frac{n}{2} \rfloor$ CNOT gates. Next, we depict another method based on  Depth First Search (DFS) on the input graph that reduces $n-1$ CNOT gates, but increases depth of the circuit moderately. We analytically derive the condition for which the reduction in CNOT gates overshadows this increase in depth, and the error probability of the circuit is still lowered. We show that all IBM Quantum Hardware satisfy this condition. We simulate these two methods for graphs of various sparsity with the \textit{ibmq\_manhattan} noise model, and show that the DFS based method outperforms the edge coloring based method, which in turn, outperforms the traditional QAOA circuit in terms of reduction in the number of CNOT gates, and hence the probability of error of the circuit.
\end{abstract}


\section{Introduction}
\label{sec:intro}

Near term quantum devices have a small number of noisy qubits that can support execution of shallow depth circuits (i.e., those with few operational cycles) only. Variational Quantum Algorithms (VQA) aim to leverage the power as well as the limitations imposed by these devices to solve problems of interest such as combinatorial optimization \cite{farhi2014quantum,wang2018quantum,hadfield2019quantum,cook2020quantum}, quantum chemistry \cite{mcclean2016theory, grimsley2019adaptive}, and quantum machine learning \cite{10.1007/978-3-030-50433-5_45,biamonte2017quantum,torlai2020machine}. VQA divides the entire computation into functional modules, and outsources some of these modules to classical computers. The general framework of VQA can be divided into four steps: (i) encode the problem into a parameterized quantum state $\ket{\psi(\theta)}$ (called the ansatz), where $\theta = \{\theta_1,\theta_2, \hdots, \theta_k\}$ are $k$ parameters; (ii) prepare and measure the ansatz in a quantum computer, and determine the value of some objective function $C(\theta)$ (which depends on the problem at hand) from the measurement outcome; (iii) in a classical computer, optimize the set of parameters to find a better set $\theta' = \{\theta'_1,\theta'_2, \hdots, \theta'_k\}$ such that it minimizes (or maximizes) the objective function; (iv) repeat steps (ii) and (iii) with the new set of parameters until convergence.

Quantum Approximate Optimization Algorithm (QAOA) is a type of VQA that focuses on finding good approximate solutions to combinatorial optimization problems. It has been studied most widely for finding the maximum cut of a (weighted or unweighted) graph (called the Max-Cut problem) \cite{farhi2014quantum}. For this problem, given a graph $G = (V,E)$ where $V$ is the set of vertices and $E$ is the set of edges, the objective is to partition $V = V_1 \cup V_2$, such that $V_1 \cap V_2 = \phi$, and the number of edges crossing the partition is maximized. Throughout this paper, we shall consider {\it connected graphs} with $|V| = n$ and $|E| = m$, but the results can be easily extended to disconnected graphs as well.

In the initial algorithm proposed by Farhi \cite{farhi2014quantum} for the Max-Cut problem, a depth-$p$ QAOA consists of $p \geq 1$ layers of alternating operators on the initial state $\ket{\psi_0}$ 
\begin{equation}
\label{eq:ansatz}
    \ket{\psi(\gamma,\beta)} = ( \displaystyle \Pi_{l = 1}^{p} e^{(-i\beta_l H_M)} e^{(-i\gamma_l H_P)}) \ket{\psi_0}
\end{equation}

where $H_P$ and $H_M$ are called the Problem and Mixer Hamiltonian respectively, and $\gamma = \{\gamma_1, \gamma_2, \hdots, \gamma_p\}$ and $\beta = \{\beta_1, \beta_2, \hdots, \beta_p\}$ are the parameters. It is to be noted that the depth $p$ of the QAOA is not related to the depth of the quantum circuit realizing the algorithm. The problem Hamiltonian describing the Max-Cut can be represented as in Eq.~(\ref{eq:max_cut}), where $w_{jk}$ is the weight associated with the edge $(j,k)$.
\begin{equation}
\label{eq:max_cut}
    H_P = \frac{1}{2}\displaystyle \sum_{(j,k) \in E} w_{jk} (I - Z_j Z_k)
\end{equation}

Furthermore, the mixer Hamiltonian should be an operator that does not commute with the Problem Hamiltonian. In the traditional QAOA, the mixer Hamiltonian is $H_M = \displaystyle \sum_{i} X_i$.

Variations to this have been studied to improve the performance of the algorithm --- such as using other mixers \cite{bartschi2020grover,zhu2020adaptive, yu2021quantum}, training the parameters to reduce the classical optimization cost \cite{larkin2020evaluation}, and modifying the cost function for faster convergence \cite{barkoutsos2020improving}. In this paper we stick to the original problem and mixer hamiltonians proposed in the algorithm by Farhi et al. \cite{farhi2014quantum}. The applicability and effectiveness of our proposed method on the modifications of this algorithm can be looked at as a follow-up work. However, our proposed methods optimize the circuit corresponding to the problem hamiltonian. Since most of the modifications suggested in the literature aim to design more efficient mixers, our proposed optimization should be applicable on those as well.

The realization of the QAOA circuit for Max-cut requires two CNOT gates for each edge (details given in Sec.~\ref{sec:ansatz}). Hardware realization of a CNOT gate is, in general, significantly more erroneous than a single qubit gate. Even in the higher end devices of IBM Quantum, such as \textit{ibmq\_montreal}, \textit{ibmq\_manhattan}, the probability of error for a single qubit gate and a CNOT gate are $\mathcal{O}(10^{-4})$ and $\mathcal{O}(10^{-2})$, respectively~\cite{ibmquantum}. In other words, CNOT gates are $100$ times more likely to be erroneous than single qubit gates. Therefore, we focus primarily on reducing the number of CNOT gates in the design of QAOA ansatz for Max-cut.

\subsubsection*{Contributions of this paper}

In this paper, we

\begin{enumerate}[(i)]
    \item propose two optimization methods for reducing the number of CNOT gates in the first layer of the QAOA ansatz based on  (1) an Edge Coloring that can reduce upto $\lfloor \frac{n}{2} \rfloor$ CNOT gates, and (2) a Depth First Search (DFS) that can reduce $n-1$ CNOT gates.
    
    \item prove that there exists no method that can reduce more than $n-1$ CNOT gates while still maintaining a fidelity of 1 with the original QAOA ansatz \cite{farhi2014quantum}.
    
    \item show that while the Edge Coloring based optimization reduces the depth of the circuit, the DFS based method may increase the depth. We further analytically derive the criteria (involving the increase in the depth and the reduction in the number of CNOT gates) for which the DFS based optimization method still leads to a lower probability of error in the circuit, and show that the IBM Quantum Hardwares \cite{ibmquantum} conform to that criteria.
    
    
    \item simulate our proposed optimization methods in Qiskit~\cite{Qiskit} with the \textit{ibmq\_manhattan} noise model and show that for graphs of different sparsity (Erdos-Renyi graphs with the probability of edge varying from 0.4 - 1)
    \begin{enumerate}
        \item the proposed reduction in the CNOT gate is still retained post transpilation
        \item the DFS based method has lower error probability than the Edge Coloring method, which in its turn has lower error probability than the traditional QAOA ansatz.
    \end{enumerate}
\end{enumerate}

Therefore, for any graph $G = (V,E)$, our proposed method provides reduction in the number of CNOT gates, and hence lowers the error probability of the circuit. Although the DFS method provably surpasses the Edge Coloring method, both in terms of reduction in CNOT gates and lowering the error probability, the latter reduces the depth of the QAOA circuit, and is also used in the DFS based method to arrange the edges which do not form a part of the DFS tree.

For the rest of this paper, we consider \emph{unweighted and connected graphs}, $i.e.$, $w_{jk} = 1$, $\forall$ $(j,k) \in E$. However, the circuit corresponding to the ansatz does not change if we have a weighted graph \cite{hadfield2019quantum}. Therefore, every analysis in this paper holds for a weighted graph as well. Furthermore, the analysis of this paper will hold as it is, or with some minimal modification, for disconnected graphs as well.

The rest of the paper is organized as follows - Section~\ref{sec:ansatz} briefly discusses the traditional QAOA ansatz design. In Section~\ref{sec:thm} we provide the proposed optimization and the criteria for it. Section~\ref{sec:edge_col} and ~\ref{sec:dfs} describe two methods of optimization based on Edge Coloring and DFS respectively. We provide the respective algorithms and analyze the conditions under which each one reduces the probability of error. We present the results of our simulation in section~\ref{sec:sim} and conclude in Section~\ref{sec:con}.

\section{Traditional ansatz design for QAOA}
\label{sec:ansatz}

The objective function of a depth-$p$ QAOA for Max-Cut~\cite{farhi2014quantum} can be expressed as
\begin{equation}
\label{eq:objective}
    \max_{\psi(\gamma,\beta)} \bra{\psi(\gamma,\beta)} H_P \ket{\psi(\gamma,\beta)}
\end{equation}
where $\gamma = \{\gamma_1, \gamma_2, \hdots, \gamma_p\}$ and $\beta = \{\beta_1, \beta_2, \hdots, \beta_p\}$ are the parameters. The trial wavefunction $\ket{\psi(\gamma,\beta)}$ is called the ansatz. The QAOA ansatz has a fixed form as described in Eq.~(\ref{eq:ansatz}). The initial state $\ket{\psi_0}$ is usually the equal superposition of $n$ qubits, where $n = |V|$. Note that the depth of the circuit required to prepare $\ket{\psi_0}$ is 1 (Hadamard gates acting simultaneously on all the qubits). Similarly, for each layer of QAOA, the operator $exp(-i \beta_l H_M)$ can be realized by a depth one circuit of $R_x(\beta_l)$ gates acting simultaneously on all the qubits.

The operator $exp(-i \gamma_l H_P)$ has a more costly implementation. Note that
\begin{eqnarray*}
exp(-i \gamma_l H_P) = \displaystyle \Pi_{(i,j) \in E} exp\left(-i \gamma_l \left(\frac{I-Z_j Z_k}{2}\right)\right).
\end{eqnarray*}

The operator $exp\left(-i \gamma_l \left(\frac{I-Z_j Z_k}{2}\right)\right)$ acts on each edge $(j,k)$, and is realized as shown below:
%
\begin{figure}[H]
\centering
	\begin{quantikz}
		{q_{j}}&&\ctrl{1} & \qw & \ctrl{1} & \qw \\
		{q_{k}}&&\targ{} & \gate{R_z(2\gamma_l)} & \targ{} & \qw
	\end{quantikz}
	\label{fig:z_jz_k}
\end{figure}

Here, $q_j$ and $q_k$ represent qubit indices $j$ and $k$, respectively. Note that Max-Cut is a symmetric problem, and therefore, the selection of control and target from qubits $q_j$ and $q_k$ for the CNOT gate corresponding to the edge $(j,k)$ is irrelevant, $i.e.$ the operator $exp\left(-i \gamma_l \left(\frac{I-Z_j Z_k}{2}\right)\right)$ can be equivalently realized as $CNOT_{kj} (I_k \otimes R_z(2\gamma_l)_{j})$ $CNOT_{kj}$. In Fig.~\ref{fig:depth_qaoa}(a) and (b), we show a 2-regular graph with four vertices and its corresponding QAOA circuit for $p = 1$ respectively.

\begin{figure}[htb]
     \centering
     \begin{subfigure}[b]{0.3\textwidth}
         \centering
         \includegraphics[scale=0.35]{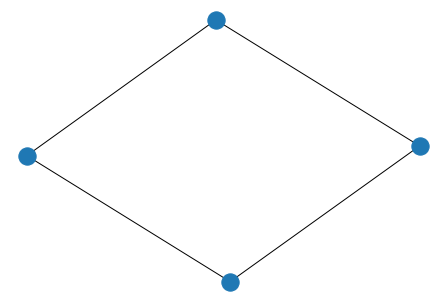}
         \caption{A 2-regular graph with four vertices}
         \label{col}
     \end{subfigure}
     \hfill
     \begin{subfigure}[b]{0.4\textwidth}
         \centering
         \includegraphics[scale=0.4]{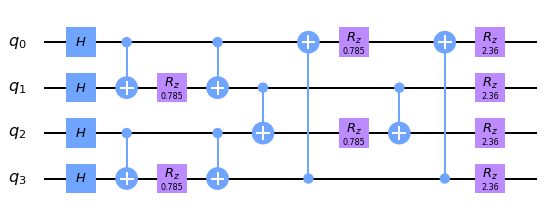}
         \caption{Max-Cut QAOA circuit for $p=1$  corresponding to the graph}
         \label{dfs}
     \end{subfigure}
    \caption{The Max-Cut QAOA circuit for $p=1$  corresponding to the 2-regular graph with four vertices; the values of $\gamma$ and $\beta$ can be arbitrary but those in this figure are the optimum values for this graph when $p = 1$}
    \label{fig:depth_qaoa}
\end{figure}

\section{Methods for Optimized ansatz design}
\label{sec:thm}

Some recent studies have proposed optimization methods for the circuit of the QAOA ansatz with respect to the underlying hardware architecture \cite{alam2020circuit}. In this paper we propose two \textit{hardware independent} methods to reduce the number of CNOT gates in the traditional QAOA ansatz. The intuition is that in the circuit realization of the operator $exp\left(-i \gamma_l \left(\frac{I-Z_j Z_k}{2}\right)\right)$ as $CNOT_{jk} (I_j \otimes R_z(2\gamma_l)_{k})$ $CNOT_{jk}$, the first CNOT gate can be removed whenever it does not make any contribution to the overall effect of the operator. Our proposed method reduces the number of CNOT gates in the circuit irrespective of the hardware architecture, and hence is applicable for any quantum device.

In Theorem~\ref{thm:equiv} we prescribe the condition where the first CNOT gate is irrelevant to the effect of the said operator, and hence may be removed.

\begin{theorem}
\label{thm:equiv}
Let $\ket{\psi}$ be an $n$-qubit state prepared in a uniform superposition (upto relative phase) over all basis states $\ket{x_1, \hdots, x_n}$ such that the relative phase on each basis state is a function of a subset $S \subset$ $\{1,2,...,n\}$ of the $n$ qubits (and independent of remaining qubits) $i.e.$
\begin{center}
    $\ket{\psi} = \frac{1}{\sqrt{2^n}}\displaystyle \sum_{x_1,...,x_n} e^{(i \phi(x_S))}\ket{x_1,...,x_n}$
\end{center}
where $x_S = \{x_i : i \in S\}$ and $\phi(x_S)$ depicts the relative phase of each superposition state. For any two qubits $\ket{j}$ and $\ket{k}$, where $\ket{k} \notin S$, and for the two operators $U_1 = CNOT_{jk} (I_j \otimes R_z(2\gamma_l)_{k}) CNOT_{jk}$ and $U_2 = (I_j \otimes R_z(2\gamma_l)_{k}) CNOT_{jk}$, we have 
\begin{center}
    $U_1\ket{\psi} = U_2\ket{\psi}$.
\end{center}
\end{theorem}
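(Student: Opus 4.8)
The plan is to show that the extra $\mathrm{CNOT}_{jk}$ at the front of $U_1$ acts trivially on $\ket{\psi}$, so that $U_1\ket{\psi} = U_2\ket{\psi}$. The key observation is that, under the hypotheses, the leading CNOT is a symmetry of $\ket{\psi}$: we claim $\mathrm{CNOT}_{jk}\ket{\psi} = e^{i\alpha}\ket{\psi}$ for some global phase (in fact $\alpha = 0$). Once this is established, writing $U_1 = \bigl[(I_j\otimes R_z(2\gamma_l)_k)\,\mathrm{CNOT}_{jk}\bigr]\,\mathrm{CNOT}_{jk} = U_2\,\mathrm{CNOT}_{jk}$, we get $U_1\ket{\psi} = U_2\,\mathrm{CNOT}_{jk}\ket{\psi} = U_2\ket{\psi}$ immediately.

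First I would expand $\ket{\psi}$ in the computational basis and apply $\mathrm{CNOT}_{jk}$ term by term. Since $\mathrm{CNOT}_{jk}$ fixes all qubits other than $k$ and sends $x_k \mapsto x_k \oplus x_j$, we have
\begin{equation*}
\mathrm{CNOT}_{jk}\ket{\psi} = \frac{1}{\sqrt{2^n}}\sum_{x_1,\dots,x_n} e^{i\phi(x_S)}\,\ket{x_1,\dots,x_k\oplus x_j,\dots,x_n}.
\end{equation*}
Now re-index the sum by substituting $x_k' = x_k \oplus x_j$ (a bijection on $\{0,1\}$ for each fixed $x_j$, hence on the whole basis). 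The crucial point is that $k \notin S$, so $\phi(x_S)$ does not depend on $x_k$ at all; therefore the phase $e^{i\phi(x_S)}$ is unchanged by the re-indexing, and the sum returns exactly $\ket{\psi}$. This gives $\mathrm{CNOT}_{jk}\ket{\psi} = \ket{\psi}$.

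The only subtlety to handle carefully is the re-indexing step: one must check that summing over all $(x_1,\dots,x_n)$ and relabelling the $k$-th coordinate genuinely permutes the basis states without collisions or omissions, which holds because for each fixed value of $x_j$ the map $x_k \mapsto x_k \oplus x_j$ is an involution on $\{0,1\}$, and the remaining coordinates (including those in $S$) are untouched. I do not expect a genuine obstacle here — the argument is essentially that $\mathrm{CNOT}_{jk}$ commutes with being ``uniform in the $k$ register'' and the phase function ignores that register — but it is worth stating explicitly that $\ket{k}\notin S$ is exactly what makes $e^{i\phi(x_S)}$ invariant under the substitution, since if $k \in S$ the phases would get scrambled and the claim would fail. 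Once $\mathrm{CNOT}_{jk}\ket{\psi}=\ket{\psi}$ is in hand, the conclusion $U_1\ket{\psi}=U_2\ket{\psi}$ is a one-line consequence.
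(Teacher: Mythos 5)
Your central lemma --- that $CNOT_{jk}\ket{\psi}=\ket{\psi}$ because the amplitude $e^{i\phi(x_S)}$ is independent of $x_k$ and, for each fixed $x_j$, the substitution $x_k\mapsto x_k\oplus x_j$ is a bijection on $\{0,1\}$ --- is correct, and isolating it is a cleaner, more modular route than the paper's, which simply expands $U_1\ket{\psi}$ and $U_2\ket{\psi}$ separately and checks that the two phase functions coincide. However, the one-line deduction that follows contains a genuine error: the identity $U_1=\bigl[(I_j\otimes R_z(2\gamma_l)_k)\,CNOT_{jk}\bigr]\,CNOT_{jk}=U_2\,CNOT_{jk}$ is false, because $CNOT_{jk}$ does not commute with $R_z(2\gamma_l)_k$ (conjugation by $CNOT_{jk}$ sends $Z_k$ to $Z_jZ_k$). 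With $U_2$ as defined in the statement, $U_2\,CNOT_{jk}=(I_j\otimes R_z(2\gamma_l)_k)\,CNOT_{jk}^2=I_j\otimes R_z(2\gamma_l)_k$, which is not $U_1$: the operator $U_1$ imprints a phase depending on $x_j\oplus x_k$, whereas $I_j\otimes R_z(2\gamma_l)_k$ imprints a phase depending on $x_k$ alone. The CNOT that your lemma lets you absorb into $\ket{\psi}$ is the \emph{rightmost} factor of $U_1$ (the gate applied first in time), so the valid factorization is $U_1=\bigl[CNOT_{jk}\,(I_j\otimes R_z(2\gamma_l)_k)\bigr]\,CNOT_{jk}$, and the surviving two-gate operator must be $\tilde U_2=CNOT_{jk}\,(I_j\otimes R_z(2\gamma_l)_k)$, i.e.\ $R_z$ first, then CNOT.

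In fairness, this ordering confusion originates in the paper itself: the theorem statement defines $U_2=(I_j\otimes R_z(2\gamma_l)_k)\,CNOT_{jk}$, but the paper's own proof computes $U_2\ket{\psi}$ by applying $R_z$ first and $CNOT_{jk}$ last, i.e.\ it silently uses $\tilde U_2$; only that ordering reproduces the $e^{-i\gamma_l(x_j\oplus x_k)}$ phase of $U_1\ket{\psi}$. Indeed, for the literal $U_2$ of the statement your own lemma gives $U_2\ket{\psi}=(I_j\otimes R_z(2\gamma_l)_k)\ket{\psi}$, whose phase is $-\gamma_l x_k$ rather than $-\gamma_l(x_j\oplus x_k)$, so the literal claim cannot hold. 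If you replace $U_2$ by $\tilde U_2$, your argument is correct and complete: $U_1\ket{\psi}=\tilde U_2\,CNOT_{jk}\ket{\psi}=\tilde U_2\ket{\psi}$. As written against the stated $U_2$, though, the key operator identity fails, and the proof does not go through.
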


\begin{proof}
Let us consider the action of the operators $U_1$ and $U_2$ on any edge $(j,k)$.
\begin{equation}
U_1\ket{\psi} = CNOT_{jk} (I_j \otimes R_z(2\gamma_l)_{k}) (CNOT_{jk}) \ket{\psi} \nonumber
\end{equation}
\begin{align}
=&\displaystyle \sum_{x_1,...,x_n} CNOT_{jk} (I_j \otimes R_z(2\gamma_l)_{k}) (CNOT_{jk})\nonumber \\
& e^{i\phi(x_S)} \ket{x_1,...,x_n}  \\
=&\displaystyle \sum_{x_1,...,x_n} CNOT_{jk} (I_j \otimes R_z(2\gamma_l)_{k}) \nonumber \\
& e^{i \phi(x_S)} \ket{x_1,..,x_k'=x_j \oplus x_k,.,x_n}  \\
=&\displaystyle  \sum_{x_1,...,x_n} e^{i(\phi(x_S)- \gamma_l (x_j \oplus x_k))} CNOT_{jk}  \nonumber \\
&  \ket{x_1,..,x_k'=x_j \oplus x_k,.,x_n}  \\
\label{eq:u1}
=& \displaystyle \sum_{x_1,...,x_n} e^{i(\phi(x_S) - \gamma_l (x_j \oplus x_k))}\ket{x_1,...,x_n}
\end{align}
where $e^{i \phi(x_S)}$ is the cumulative effect of operators acting on previous edges (= 0 if $(j,k)$ is the first). We have dropped the normalization constant for brevity.

Similarly,
\begin{equation}
    U_2\ket{\psi} = CNOT_{jk} (I_j \otimes R_z(2\gamma_l)_{x_k}) \ket{\psi} \nonumber
\end{equation}
\begin{equation}
    = CNOT_{jk} \displaystyle \sum_{x_1,...,x_n} e^{i((\phi(x_S)) - \gamma_l x_k)}\ket{x_1,...,x_n} \nonumber
\end{equation}
\begin{equation}
\label{eq:u2_mid}
    = \displaystyle \sum_{x_1,...,x_n} e^{i((\phi(x_S)) - \gamma_l x_k)} \ket{x_1,..,x_j \oplus x_k,..,x_n}
\end{equation}

where the qubit in $k^{\text{th}}$ position changes to $x_j \oplus x_k$ due to the $CNOT_{jk}$ operation. Now, substituting $x_k' = x_j \oplus x_k$ in the above equation, we get

\begin{equation}
    U_2\ket{\psi} = \displaystyle \sum_{x_1,...,x_n} e^{i((\phi(x_S)) - \gamma_l x_k)} \ket{x_1,..,x_j \oplus x_k,..,x_n} \nonumber
\end{equation}
\begin{equation}
    = \displaystyle \sum_{x_1,..,x_k',..,x_n} e^{i((\phi(x_S)) - \gamma_l (x_j \oplus x_k'))} \ket{x_1,..,x_k',..,x_n} \nonumber
\end{equation}
\begin{equation}
\label{eq:u2}
    = \displaystyle \sum_{x_1,..,x_k,..,x_n} e^{i((\phi(x_S)) - \gamma_l (x_j \oplus x_k))} \ket{x_1,..,x_k,..,x_n}
\end{equation}

Here since $k \notin S$, the substitution in second last step, does not change the phase $e^{i \phi(x_S)}$. The last step is valid since $x_k'$ is a running index and hence can be changed to $x_k$. Thus Eq.~(\ref{eq:u1}) and Eq.~(\ref{eq:u2}) are identical.
\end{proof}

\begin{corollary}
\label{cor:cond}
For a graph $G$, we can optimize the circuit for the operator $exp\left(-i \gamma_l \left(\frac{I-Z_j Z_k}{2}\right)\right)$ corresponding to an edge $(j,k)$ replacing $U_1$ by $U_2$,  provided that the target vertex does not occur in any of the edge operators run earlier. In other words, the following conditions are sufficient to optimize an edge:-
\begin{enumerate}
    \item if the vertex $j$ is being operated on for the first time, then it acts either as a control or a target for the CNOT gate corresponding to the operator;
    \item the vertex $j$ does not act as a target of the CNOT gate if it occurs as a part of any other edge operators run earlier.
\end{enumerate}
\end{corollary}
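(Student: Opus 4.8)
The plan is to derive Corollary~\ref{cor:cond} as a direct consequence of Theorem~\ref{thm:equiv} by tracking which qubits appear in the phase-dependency set $S$ as the edge operators are applied one by one. The key observation is that after applying the operator $U_2 = (I_j \otimes R_z(2\gamma_l)_k) \cdot CNOT_{jk}$ for an edge $(j,k)$, the resulting state is still of the form covered by the hypothesis of Theorem~\ref{thm:equiv}: it is a uniform superposition up to a relative phase $\phi'(x_{S'})$, where the new dependency set $S'$ is $S \cup \{j, k\}$ (the CNOT makes the $k$-th amplitude depend on $x_j$, and the $R_z$ introduces dependence on the updated $k$-th register). So the argument is inductive over the ordered list of edges.

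First I would set up the induction: process the edges $e_1, e_2, \ldots, e_m$ of $G$ in the fixed order chosen by the optimization routine, maintaining the invariant that the state before processing $e_t$ has the Theorem~\ref{thm:equiv} form with dependency set $S_{t-1} = $ (the set of all vertices incident to $e_1, \ldots, e_{t-1}$). The base case is the initial uniform superposition $\ket{\psi_0}$, for which $S_0 = \emptyset$ and $\phi \equiv 0$. For the inductive step, when we reach edge $e_t = (j,k)$, Theorem~\ref{thm:equiv} tells us we may replace $U_1$ by $U_2$ on this edge without changing the state, \emph{provided} the target qubit is not in $S_{t-1}$. Then I would verify that $U_2$ applied to a Theorem-form state yields another Theorem-form state with the updated dependency set, closing the induction.

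Next I would translate the abstract condition ``$k \notin S_{t-1}$'' into the two combinatorial conditions stated in the corollary. Condition~1 says: if $j$ is touched for the first time at edge $e_t$, we are free to use it as either control or target, because the freedom to pick either endpoint as control (noted in Section~\ref{sec:ansatz}, since Max-Cut is symmetric) lets us always designate the fresh vertex as the target, and a fresh vertex is by construction not in $S_{t-1}$. Condition~2 covers the case where $j$ has appeared before: then $j$ must be the control (not the target) at $e_t$, so that the \emph{other} endpoint plays the role of $k$ in Theorem~\ref{thm:equiv}; if that other endpoint is also already in $S_{t-1}$, the edge cannot be optimized. I would phrase it so that ``the target vertex of the CNOT for $e_t$ has not appeared as a vertex of any earlier edge'' is exactly the reusable hypothesis. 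This is a restatement, so it should be short.

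The main obstacle — really the only subtle point — is checking that the class of states in Theorem~\ref{thm:equiv} is closed under the optimized edge operator $U_2$, i.e. that applying $CNOT_{jk}$ followed by $R_z(2\gamma_l)_k$ to $\frac{1}{\sqrt{2^n}}\sum e^{i\phi(x_S)}\ket{x_1,\ldots,x_n}$ again produces a uniform-magnitude superposition whose phase depends only on a \emph{fixed subset} of the output registers (not on all $n$). One must be careful that the CNOT relabels the $k$-th coordinate to $x_j \oplus x_k$, so after the change of summation variable the phase picks up dependence on $x_j$ (hence $j \in S'$) as well as on the updated $x_k$ (hence $k \in S'$) — exactly the computation already carried out in Eqs.~(\ref{eq:u2_mid})--(\ref{eq:u2}) in the proof of the theorem. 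Once this closure property is in hand, the corollary follows by a routine induction and a dictionary between the set-membership condition and the two graph-theoretic conditions; I would not expect any further difficulty.
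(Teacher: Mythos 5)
Your proposal is correct and follows essentially the same route as the paper: both derive the corollary from Theorem~\ref{thm:equiv} by tracking the set $S$ of qubits on which the relative phase depends as the edge operators are applied in order, and by noting that a freshly-touched vertex is not yet in $S$ and hence may serve as the target. Your write-up is in fact more careful than the paper's two-line argument, since you explicitly verify the closure property (that applying an edge operator to a state of the form in Theorem~\ref{thm:equiv} yields another such state with $S' = S \cup \{j,k\}$, as already computed in Eqs.~(\ref{eq:u2_mid})--(\ref{eq:u2})) and organize the whole thing as an induction over the ordered edge list --- details the paper leaves implicit.
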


\begin{proof}

The first time we consider an edge adjacent to a vertex $j$, where $j \notin x_S$, (see Theorem~\ref{thm:equiv}) the relative phase $\phi(x_S)$ does not depend on $j$. Thus it satisfies the condition of Theorem~\ref{thm:equiv} and allows optimization of the operator.

 On the other hand, if the vertex $j$ occurs as part of an edge operator already run, the phase on the basis state $\phi$ can potentially depend on $S$, $i.e.$ $j \in S$. By not allowing it to act as target, we satisfy the conditions of Theorem~\ref{thm:equiv}.
\end{proof}

From the above discussion, it follows that if we arbitrarily choose edges for applying the operator $exp\left(-i \gamma_l \left(\frac{I-Z_j Z_k}{2}\right)\right)$, then it cannot be guaranteed that a large number of edges will conform to  Corollary~\ref{cor:cond}. The requirement, in fact, imposes a precedence ordering among the edges. In Section~\ref{sec:edge_col} and ~\ref{sec:dfs}, we provide two algorithmic procedures for maximizing the number of edges that satisfy the requirement in order to reduce the number of CNOT gates in the ansatz.

For the rest of the paper, we say that {\it an edge is optimized} if the operator $U_2$ can be operated on that edge instead of $U_1$.

\section{Edge Coloring based Ansatz Optimization }
\label{sec:edge_col}
The total error one incurs in a circuit depends on the number of operators (since a larger number of operators tend to incur more error) and the depth of the circuit (corresponding to relaxation error). In this section, we discuss how one can minimize the depth of the circuit. We also discuss the possibility of reduction in CNOT gates in the depth optimized circuit. 

The operators $H_M$ act on distinct qubits and hence can be run in parallel contributing to a depth of 1 (for each step of the QAOA). On the other hand, the operators in $H_P$ can potentially contribute a lot to depth since the edge operators do not act on disjoint vertices.
At a given level of the circuit, we can only apply edge operators corresponding to a vertex disjoint set of edges. Thus the minimum depth of the circuit will correspond to the minimum value $k$ such that we can partition the set of edges $E$ as a disjoint union $\cup_i E_i$ where each subset $E_i$ consists of a vertex disjoint collection of edges. This in turn corresponds to the edge coloring problem in a graph.

Given a graph $G = (V,E)$ and a set of colors $\chi' = \{\chi'_1, \chi'_2, \hdots, \chi'_k\}$, an edge coloring \cite{west2001introduction} assigns a color to each edge $e \in E$, such that any two adjacent edges ($i.e.$, edges incident on a common vertex) must be assigned distinct colors. The edge coloring problem comprises of coloring the edges using the minimum number of colors $k$. Note that the operators corresponding to edges having the same color can therefore be executed in parallel.
Moreover,
\begin{enumerate}
    \item the number of colors in optimal coloring, called the chromatic index, corresponds to the minimum depth of the circuit;
    \item edges having the same color corresponds to the operators $exp\left(-i \gamma_l \left(\frac{I-Z_j Z_k}{2}\right)\right)$ that can be executed simultaneously.
\end{enumerate}

Optimal edge coloring is an NP-complete problem \cite{west2001introduction}.  But it is not practical to allocate exponential time to find the optimal edge-coloring as a pre-processing step for QAOA. Vizing's Theorem states that every simple undirected graph can be edge-colored using at most $\Delta + 1$ colors, where $\Delta$ is the maximum degree of the graph \cite{vizing1964estimate}. This is within an additive factor of 1 since any edge-coloring must use at least $\Delta$ colors. Misra and Gries algorithm \cite{misra1992constructive} achieves the above bound constructively in $\mathcal{O}(n\cdot m)$ time. Therefore, we use the Misra and Gries edge coloring algorithm.  Algorithm~\ref{alg:edgecol} below computes the sets of edges having the same color using  Misra and Gries algorithm as a subroutine. It returns the largest set $S_{max}$ of edges having the same color in the coloring computed by Misra and Gries algorithm.

\begin{algorithm}[H]
\caption{Edge Coloring based Ansatz Optimization}
\label{alg:edgecol}
\begin{algorithmic}[1]
\REQUIRE A graph $G = (V,E)$.
\ENSURE Largest set $S_{max}$ of edges having the same color.
\STATE Use the Misra and Gries algorithm to color the edges of the graph $G$.
\STATE $S_i \leftarrow$ set of edges having the same color $i$, $1 \leq i \leq \chi'$.
\STATE $S_{max} \leftarrow$ $max\{S_1, S_2, \hdots, S_{\chi'}\}$.
\STATE Return $S_{max}$.
\end{algorithmic}
\end{algorithm}

This edge coloring approach provides the minimum depth achievable for QAOA ansatz using a polynomial time pre-processing. After reducing the depth, we now try to further reduce errors by decreasing the number of CNOT gates.
Recall that the operators corresponding to edges with the same color can be executed in parallel. We use the operators corresponding to the edges of $S_{max}$ as the first layer of operators. The other layers can be used in any order.

\begin{lemma}
Every edge in the first layer can be optimized according to Corollary~\ref{cor:cond}.
\end{lemma}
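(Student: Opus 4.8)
The plan is to exploit the defining property of an edge coloring: any two edges assigned the same color are non-adjacent, so the set $S_{max}$ returned by Algorithm~\ref{alg:edgecol} is a \emph{matching} — its edges are pairwise vertex-disjoint. Since the operators corresponding to $S_{max}$ are used as the first layer of $exp(-i\gamma_l H_P)$, the state acted upon before any of them is applied is simply the uniform superposition $\ket{\psi_0}$, whose relative phase depends on no qubit; in the language of Theorem~\ref{thm:equiv} this is the case $S = \emptyset$.

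First I would fix an arbitrary processing order $e_1, e_2, \hdots$ of the edges of $S_{max}$ and argue by induction on the position $t$. When we reach $e_t = (j_t,k_t)$, the only edge operators applied so far are those for $e_1,\hdots,e_{t-1}$, all of which lie in the matching $S_{max}$ and hence share no vertex with $e_t$. By the computation in the proof of Theorem~\ref{thm:equiv}, applying $U_2$ to an edge $(j,k)$ changes the relative phase into one depending on the pair $\{j,k\}$, while preserving the ``uniform superposition up to relative phase'' form that the theorem assumes. Therefore after processing $e_1,\hdots,e_{t-1}$ the phase-dependence set is $S = \bigcup_{s<t}\{j_s,k_s\}$, which is disjoint from $\{j_t,k_t\}$; in particular $k_t \notin S$, so the hypothesis of Theorem~\ref{thm:equiv} is satisfied and $U_1\ket{\psi} = U_2\ket{\psi}$ for $e_t$. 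Equivalently, both conditions of Corollary~\ref{cor:cond} hold for $e_t$: its target vertex is being operated on for the first time, and it has not occurred (as a target or otherwise) in any earlier edge operator.

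Since $t$ was arbitrary, every edge in the first layer can be optimized, which is the claim. I do not anticipate a genuine obstacle; the only point requiring a careful (but entirely routine) check is the inductive bookkeeping of the set $S$ — verifying that applying $U_2$ to a matching edge enlarges $S$ only by that edge's two endpoints and keeps the state in the structural form demanded by Theorem~\ref{thm:equiv}, so that the theorem may be reapplied to the next edge. Everything else follows immediately from the fact that a color class of a proper edge coloring is a matching.
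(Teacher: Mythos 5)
Your proof is correct and rests on the same key observation as the paper's: a color class of a proper edge coloring is a matching, so both endpoints of every first-layer edge are untouched by any earlier edge operator, and Corollary~\ref{cor:cond} applies directly. Your additional inductive bookkeeping of the phase-dependence set $S$ is a more careful spelling-out of the same argument, not a different route.
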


\begin{proof}
For every edge $(u,v)$ in the first layer, both the vertices are adjacent to an edge for the first time, $i.e.$, both $u, v \notin S$. Therefore, it satisfies the criteria of Corollary~\ref{cor:cond}, and hence can be optimized. In fact, any one of the qubits corresponding to the two vertices can be selected as the control for the CNOT operation.
\end{proof}

Some edges in the corresponding layers may be optimized as well. Nevertheless, it is trivial to come up with examples where this is not the case (e.g., a complete graph of 4-vertices). Therefore, in the worst case scenario, only the edges in the first layer can be optimized. However, since this method does not increase the depth of the circuit, it always leads to a more efficient circuit design than the traditional QAOA circuit with lower depth (by 1 since the first layer of CNOT is absent) and fewer CNOT gates.

For general graphs, the worst case scenario is, therefore, that only the edges in the first layer can be optimized. In the following subsection we provide an analysis on the number of optimized edges using this method.

\subsection{Lower and upper bound on the number of optimized edges}
Let us assume that the chromatic index of a graph $G = (V,E)$ is $\chi'$. Using the Misra and Gries Theorem \cite{misra1992constructive} we can find a polynomial time coloring using at most $\Delta + 1$ colors, where $\Delta$ is the maximum degree of the graph. Therefore, on an average, $\lceil \frac{m}{\Delta + 1} \rceil$ edges have the same color.

More precisely, two extreme cases arise: (i) the colors may be uniformly distributed, and the maximum number of edges having the same color is $\lceil \frac{m}{\Delta + 1} \rceil$; or (ii) one of the colors is used dominantly for most of the edges. Nevertheless, note that for all the edges adjacent to the same vertex, a particular color can be assigned to one of the edges only. Therefore, the dominant color can be used at most on $\lfloor \frac{n}{2} \rfloor$ edges, where $n = |V|$. Therefore, the possible number of optimized edges that can be obtained via the Edge Coloring method is as shown in Eq.~(\ref{eq:edge_col}).
\begin{equation}
\label{eq:edge_col}
    \lceil \frac{m}{\Delta + 1} \rceil \leq ~\# ~Optimized ~Edges \leq \lfloor \frac{n}{2} \rfloor.
\end{equation}

\section{Depth First Search based Ansatz Optimization}
\label{sec:dfs}
As the edge coloring based algorithm can optimize at most $\lfloor \frac{n}{2} \rfloor$ edges, in this section, we present a Depth First Search (DFS) based optimization procedure which can optimize $n-1$ edges. Algorithm~\ref{alg:dfs}, for obtaining the optimized QAOA ansatz, uses the standard DFS algorithm \cite{cormen2009introduction}, by returning the tree edges or discovery edges forming the DFS tree.

In this method, we start from the first vertex of the DFS tree. For every edge $e = (u,v)$ in the DFS tree, the vertex $u$ is made the control and $v$ is made the target for the CNOT gate corresponding to that edge. The edges are operated on sequentially one after another, as in the set $E_{dfs}$ (the tree edges). Once every edge in the DFS tree has been operated on, the remaining edges can be executed in any order. In fact, one may opt to use the Edge Coloring method on the remaining edges to obtain the minimum depth of the circuit corresponding to these edges, although CNOT gates cannot be reduced any further.

\begin{algorithm}[H]
\caption{DFS based Ansatz Optimization }
\label{alg:dfs}
\begin{algorithmic}[1]
\REQUIRE A graph $G = (V,E)$.
\ENSURE A list $E_{dfs}$ of $n-1$ edges.
\STATE $E_{dfs} = \{\}$
\STATE $u \leftarrow$ randomly selected vertex from $V$.
\STATE Start DFS from the vertex $u$. For every vertex $v$ discovered from its predecessor $v'$, $E_{dfs} = E_{dfs} \cup (v',v)$.
\STATE Return $E_{dfs}$.
\end{algorithmic}
\end{algorithm}

\begin{theorem}
Each edge in the DFS tree can be optimized according to Corollary~\ref{cor:cond}.
\label{thm:dfs}
\end{theorem}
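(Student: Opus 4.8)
The plan is to argue by induction on the edges of the DFS tree, taken in the order in which they are discovered — equivalently, the order in which they are appended to $E_{dfs}$ and then applied by Algorithm~\ref{alg:dfs} — carrying along the inductive invariant that the state produced so far is exactly of the form demanded by the hypothesis of Theorem~\ref{thm:equiv}.

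First I would set up the invariant precisely: after the operators corresponding to the first $t$ tree edges have been applied, the state equals $\frac{1}{\sqrt{2^n}}\sum_{x_1,\dots,x_n} e^{i\phi_t(x_{S_t})}\ket{x_1,\dots,x_n}$, where $S_t$ is the set of vertices incident to those first $t$ tree edges and $\phi_t$ depends only on the coordinates indexed by $S_t$. The base case $t=0$ is immediate, since $\ket{\psi_0}$ is the uniform superposition and $S_0=\emptyset$. For the inductive step, consider the $(t+1)$-th tree edge $(v',v)$, where $v$ is the vertex discovered from its predecessor $v'$; Algorithm~\ref{alg:dfs} fixes $v'$ as control and $v$ as target. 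The key combinatorial point is that $v\notin S_t$: every tree edge processed earlier has the form $(w',w)$ with both $w$ and its predecessor $w'$ discovered strictly before $v$, so neither endpoint can equal $v$, and hence $v$ is incident to none of the first $t$ tree edges. Thus $v$ plays the role of the qubit $k\notin S$ in Theorem~\ref{thm:equiv} (this is precisely the situation singled out in Corollary~\ref{cor:cond}, the target not occurring in any earlier edge operator), so $U_1\ket{\psi}=U_2\ket{\psi}$ and the edge is optimized. Finally, by Eq.~(\ref{eq:u2}) the new state is again a uniform superposition whose relative phase depends only on $S_t\cup\{v',v\}=S_{t+1}$, which re-establishes the invariant and closes the induction.

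The part to be careful about — more a bookkeeping matter than a genuine difficulty — is the tight coupling between the DFS discovery order and the freshness of the target: the argument fails if the tree edges are applied in some other order, or if the newly discovered endpoint is taken as control rather than target. I would therefore make explicit that Algorithm~\ref{alg:dfs} pins down both the processing order and the control/target assignment, so that at the moment edge $(v',v)$ is handled the vertex $v$ has never appeared in any previously applied operator and Theorem~\ref{thm:equiv} applies verbatim. Since a DFS tree of a connected graph on $n$ vertices has exactly $n-1$ edges, the induction simultaneously shows that all $n-1$ tree edges are optimized, matching the count claimed for the algorithm.
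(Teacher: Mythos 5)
Your proof is correct and takes essentially the same route as the paper's: induction along the DFS discovery order, with the key observation that the newly discovered endpoint $v$ has not appeared in any earlier edge operator (so $v\notin S$) and can therefore be taken as the target, allowing Theorem~\ref{thm:equiv} to apply. Your version is in fact slightly more careful than the paper's, since you explicitly carry the invariant that the state remains a uniform superposition with phase depending only on the vertices touched so far, a point the paper leaves implicit.
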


\begin{proof}
We prove this by the method of induction. Let $u$ be the vertex from which the DFS tree starts. Then $u$ is being operated on for the first time, and, hence, can act both as a control/target for the CNOT operation corresponding to the first edge (Corollary~\ref{cor:cond}). Choose $u$ to be the control.

\textbf{Base case}: If $v$ is the vertex that is discovered from $u$ via the edge $(u,v)$, then choosing $u$ as the control and $v$ as the target satisfies Corollary~\ref{cor:cond}. Therefore, the edge $(u,v)$ can be optimized.

\textbf{Induction hypothesis}: Let the DFS tree has been constructed upto some vertex $j$, and every edge $(e_1, e_2)$ in this DFS tree so far can be optimized, $i.e.$ $e_1$ acts as the control and $e_2$ as the target.

\textbf{Induction step}: Let the next vertex in the DFS tree, that is discovered from some vertex $i$, is $k$. From DFS algorithm, the vertex $i$ must have been discovered in some previous step. Since vertex $k$ was not previously discovered, so $k \notin x_S$ and hence the edge $(i,k)$ can be optimized if we select $i$ to be the control and $k$ as the target.
\end{proof}

Therefore, the DFS based optimization method provides $n-1$ optimized edges, $i.e.$, a reduction in the number of CNOT gates by $n-1$. We now show in Theorem~\ref{thm:optimal} that this is the maximum number of edges that can be optimized.

\begin{figure}[H]
     \centering
     \begin{subfigure}[b]{0.45\textwidth}
         \centering
         \includegraphics[scale=0.4]{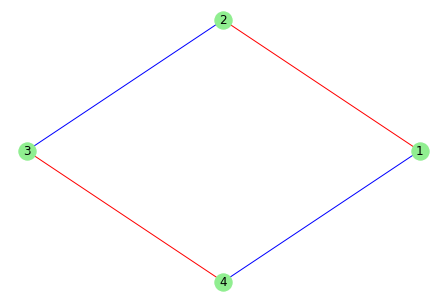}
         \caption{Edge Coloring Based Optimization}
         \label{col}
     \end{subfigure}
     \hfill
     \begin{subfigure}[b]{0.45\textwidth}
         \centering
         \includegraphics[scale=0.4]{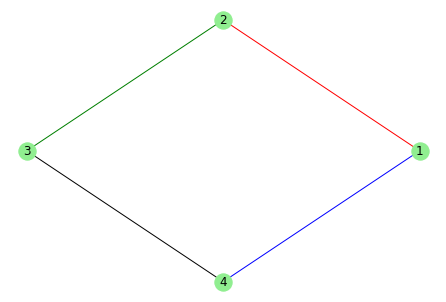}
         \caption{Depth First Search Based Optimization}
         \label{dfs}
     \end{subfigure}
    \caption{Depth of the ansatz circuit when using (a) Edge Coloring and (b) DFS based method; edges having same color can be executed simultaneously. The depth of the spanning tree in the DFS based method is 4, compared to depth 2 for the Edge Coloring based method. However, the number of optimized edges in the Edge Coloring based method is 2, while that by the DFS based method is 3.}
    \label{fig:depth}
\end{figure}

\begin{theorem}
Optimization of ansatz for Max-Cut QAOA with p=1, by deletion of the CNOT gate in the first level for an edge of the graph, can be done for no more than $n-1$ edges.
\label{thm:optimal}
\end{theorem}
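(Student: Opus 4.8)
The plan is to reduce the statement to a counting argument over the \emph{target} vertices of the optimized edges. Fix the order $e_1,e_2,\dots,e_m$ in which the circuit applies the edge operators, and for each optimized edge let its target be the qubit carrying the surviving $R_z$ gate, i.e., the target of the deleted CNOT.

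The crux is to upgrade Corollary~\ref{cor:cond} from a sufficient to a necessary condition: if the first CNOT of an edge $e_t=(j,k)$ with target $k$ is deleted and the resulting circuit still has fidelity $1$ with the output of the original QAOA circuit, then $k$ is not an endpoint of any of $e_1,\dots,e_{t-1}$. To see this I would track the state produced after the first $t-1$ edge operators (some of which may themselves already have had a CNOT deleted): starting from $\ket{\psi_0}$, every subsequent gate is either a CNOT (a basis-state permutation) or a diagonal $R_z$ (a phase), so the state always has the form $\tfrac{1}{\sqrt{2^n}}\sum_x e^{i\phi(x)}\ket{\rho(x)}$ for some accumulated phase $\phi$ and some composition $\rho$ of CNOTs. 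Extending the computation in the proof of Theorem~\ref{thm:equiv}, one shows by induction that $\phi$ genuinely depends on the value of qubit $k$ as soon as $k$ has appeared in an earlier edge operator; when that is the case, replacing $U_1$ by $U_2$ on $e_t$ introduces a residual CNOT acting on a wire that already carries a $k$-dependent phase, and this residual cannot be compensated by the remaining gates (CNOTs on other wire pairs and diagonal $R_z$'s), so the final overlap with the original output falls below $1$. This last point --- ruling out that the residuals of several deleted CNOTs cancel one another --- is where I expect the real work to be.

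Granting the necessary condition, the bound is immediate. (i) Distinct optimized edges have distinct targets: if $e_s=(j,k)$ and $e_t=(j',k)$ with $s<t$ were both optimized with target $k$, then $k$ occurs in $e_s$, an edge earlier than $e_t$, contradicting the freshness of the target of $e_t$; hence the map sending each optimized edge to its target vertex is injective. (ii) Some vertex is never a target: writing $e_1=(u,v)$, at least one of $u,v$ --- call it $w$ --- is not the target of $e_1$, and $w$ cannot be the target of any later optimized edge either, since $w$ occurs in the earlier edge $e_1$. Thus the targets of the optimized edges form a set injectively parametrized by those edges and contained in $V\setminus\{w\}$, so there are at most $n-1$ optimized edges, as claimed. (If one reads the theorem as referring only to optimizations carried out under the hypothesis of Corollary~\ref{cor:cond}, the freshness condition holds by assumption and steps (i)--(ii) alone complete the proof.)
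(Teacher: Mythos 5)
Your proof is correct under the same reading the paper itself adopts, but it reaches the bound by a genuinely different route. The paper argues by contradiction through cycle detection: $n$ or more optimized edges on $n$ vertices force a cycle, and an edge of that cycle would need both of its endpoints to serve as the control of its CNOT, which is impossible. You instead build an injection from optimized edges to vertices (each optimized edge is tagged by its target, and freshness of targets forces distinct optimized edges to have distinct targets) and then exclude one vertex (a non-target endpoint of the first edge operator), giving at most $n-1$ targets and hence at most $n-1$ optimized edges. Your version is arguably the tighter of the two, because it tracks the temporal ordering of the edge operators explicitly; the paper's choice of cycle edge (one whose removal leaves a tree) does not by itself guarantee that the neighboring optimized edges precede it in the circuit, which is what is actually needed to force both endpoints to be controls --- that step works cleanly only if one picks, say, the \emph{last} cycle edge in circuit order. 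Both arguments share the same unproven premise, namely that the sufficient condition of Corollary~\ref{cor:cond} is also necessary (i.e., that a CNOT whose target already carries phase cannot be deleted and later compensated). You are right to flag this as the real work, and right that it is not done: the paper simply invokes Corollary~\ref{cor:cond} as if it were a necessary condition. Under the restricted reading you offer in your final parenthesis --- which is the reading the paper implicitly uses --- your steps (i) and (ii) constitute a complete and correct proof.
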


\begin{proof}
Let us assume that there is some method by which at least $n$ edges can be optimized. Now, the connected subgraph which contains all the $n$ vertices and at least $n$ optimized edges must contain a cycle. Let $(u,v)$ be an edge of this cycle, $i.e.$, if $(u,v)$ is removed then the residual graph is a tree (in case there are $> n$ edges, the removal of edges can be performed recursively till such an edge $(u,v)$ is obtained whose removal makes the residual graph a tree). For this edge $(u,v)$, both the vertices $u$ and $v$ are endpoints of some other optimized edges as well. Therefore, from Corollary~\ref{cor:cond} both $u$ and $v$ must act as the control for the CNOT gate corresponding to the edge $(u,v)$ in order for this edge to be optimized, which is not possible. Therefore, it is not possible to optimize more than $n-1$ edges.
\end{proof}

Therefore, the DFS method is optimal in the number of optimized edges. However, we note that the DFS based method associates an ordering of the edges, $i.e.$, some of the edges which could have been operated on simultaneously, cannot be done so now.
This, in turn, can lead to an increase in the depth of the circuit. Hence, a penalty for this method producing optimal reduction in CNOT gates, is that it increases the depth of the circuit. 

\begin{figure}[htb]
     \centering
     \begin{subfigure}[b]{0.45\textwidth}
         \centering
         \includegraphics[scale=0.37]{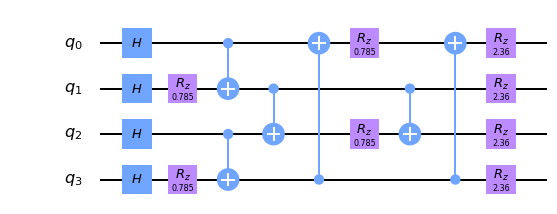}
         \caption{Optimized circuit by Edge Coloring based method }
         \label{col}
     \end{subfigure}
     \hfill
     \begin{subfigure}[b]{0.48\textwidth}
         \centering
         \includegraphics[scale=0.37]{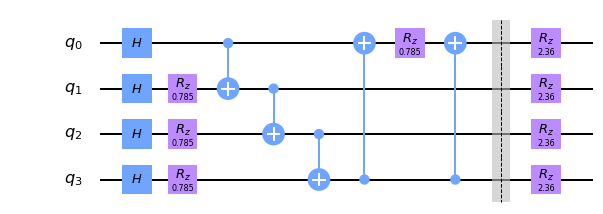}
         \caption{Optimized circuit by DFS based method}
         \label{dfs}
     \end{subfigure}
    \caption{Max-Cut QAOA ansatz with $p=1$ corresponding to (a) Edge Coloring and (b) DFS based optimization. In (a), the first CNOT gates of the operators have been deleted. The operators corresponding to $(q_1,q_2)$ and $(q_3,q_0)$ act in parallel. In (b), the first CNOT gates of three operators have been deleted, but the depth has increased.}
    \label{fig:opt}
\end{figure}

In Fig.~\ref{fig:depth}, we show a 2-regular graph with four vertices. In Fig.~\ref{fig:depth}(a), the depth of the circuit corresponding to the operator $exp(-i\gamma_l H_P)$ is 2; the edges of the same color can be operated on simultaneously. If the red (or blue) edges form the first layer, then those two edges are optimized. However, if we use the DFS method, with the DFS tree starting from, say, vertex 1, then the edges $(1,2),(2,3)$ and $(3,4)$ can be optimized (Fig.~\ref{fig:depth}(b)). Now these three edges must be operated on one after another, followed by the fourth edge. Thus the depth of the circuit corresponding to the operator $exp(-i\gamma_l H_P)$ becomes 4. The circuits corresponding to these two scenarios are depicted in Fig~\ref{fig:opt}(a) and (b) respectively.


The question, therefore, is whether this increase in depth is always acceptable, even with the increased reduction in the number of CNOT gates as, with increased depth, the circuit becomes more prone to relaxation error.
Numerical analysis and simulation (Section~\ref{sec:sim}) establises that although the depth of the circuit is increased, the overall error probability of the circuit is reduced further.

\subsection{When is the DFS based method useful?}

In this subsection, we formulate a relation for which the increase in the depth still leads to a lower probability of error for the reduction in the number of CNOT gates. For this analysis, we make an assumption that the error in the circuit arises only from noisy CNOT gates and the depth of the circuit ($i.e.$, the $T_1$ time). Although this assumption is idealistic, the ansatz primarily consists of layers of CNOT gates. Furthermore, in superconductors, $R_z$ gates are executed virtually \cite{mckay2017efficient}, and hence does not lead to any gate error. Therefore, CNOT is the primary source of gate error and with increasing depth, the qubits become more prone to relaxation error. Therefore, this assumption allows for a simple but powerful model for analyzing the query at hand.

Let us assume that the time duration and the error probability of each CNOT gate is $t_{cx}$ and $p_{cx}$ respectively. Let there be $N$ layers of CNOT operations. Note that although there can be multiple CNOT gates in each layer, the time duration of each layer is $t_{cx}$ only. Therefore, the probability of no error ($i.e.$, the probability that the circuit remains error free) after $N$ layers of operations, considering only relaxation error, is
    $exp(-\frac{N t_{cx}}{T_1})$.

Let there be $k$ CNOT gates in the original circuit. Therefore, the probability of no error after the operation of the CNOT gates, considering only CNOT gate error, is
    $(1 - p_{cx})^k$.

Combining both the sources of the errors, Eq.~(\ref{eq:no_err}) gives the probability of success ($i.e.$, the probability of no error) after a single cycle of computation of the QAOA ansatz.
\begin{equation}
    \label{eq:no_err}
    P_{success} = (1 - p_{cx})^k \cdot exp(-\frac{N t_{cx}}{T_1})
\end{equation}

Henceforth, $P_{success}$ will refer to the  probability of success ($i.e.$, how close the noisy outcome is to the noise-free ideal outcome) of the ansatz circuit execution for a single run of the algorithm. Note that in QAOA, the ansatz is computed multiple times for multiple cycles, and the objective is to maximize the expectation value of the outcome.

We further assume that after the optimization using DFS based method, $k_1$ CNOT gates have been reduced leading to an increase in $N_1$ layers of operations. The probability that this optimized circuit remains error-free is given in Eq.~(\ref{eq:opt_err}).
\begin{equation}
    \label{eq:opt_err}
    P^{opt}_{success} = (1 - p_{cx})^{(k-k_1)} \cdot exp(-\frac{(N+N_1) t_{cx}}{T_1})
\end{equation}

The optimization is fruitful only when $P^{opt}_{success} \geq P_{success}$. Note that
\begin{center}
    $P^{opt}_{success} = P_{success} \cdot exp(-\frac{N_1 t_{cx}}{T_1}) / (1 - p_{cx})^{k_1}$
\end{center}

Since both $P^{opt}_{success}$ and $P_{success} \leq 1$, the required inequality holds only if $exp(- \frac{N_1 t_{cx}}{T_1}) / (1 - p_{cx})^{k_1} \geq 1$. In other words,
\begin{eqnarray}
\label{eq:cond1}
exp(-\frac{N_1 t_{cx}}{T_1}) &\geq& (1 - p_{cx})^{k_1} \nonumber\\
\Rightarrow N_1 &\leq& \lambda \times k_1, \nonumber\\
~where && \lambda = \frac{-ln(1 - p_{cx}) \cdot T_1}{t_{cx}}.
\end{eqnarray}

The constant $\lambda$ is defined in terms of parameters specific to the quantum device.

\subsubsection{Effect of varying $\lambda$}
Given that $\lambda = f(t_{cx},p_{cx},T_1)$,  we expect the $T_1$ value to increase, and the $t_{cx}$ and $p_{cx}$ values to decrease as technology improves. The value of $\lambda$ increases for increasing $T_1$ and/or decreasing $t_{cx}$, whereas it decreases for decreasing $p_{cx}$. Therefore, 

\begin{itemize}
    \item If $p_{cx}$, the probability of error for CNOT gates decreases, the optimization becomes less useful since we are increasing the probability of relaxation error, but the reduction in error probability becomes less. As per this observation, for smaller  $\lambda$, Eq.~(\ref{eq:cond1}) is satisfied when the increase in depth is reduced as well.
    
    \item Similarly, if (i) $T_1$ increases, then the qubit can retain its conherence for a longer period of time, or (ii)  $t_{cx}$ decreases, then the overall computation time of the circuit decreases as well, and the circuit can allow some relaxation in the depth even if $T_1$ remains unchanged. We observe that for both of these cases, by Eq.~(\ref{eq:cond1}),  $\lambda$ increases, thus allowing more increase in depth for a given reduction in the number of CNOT gates.
\end{itemize}

\subsection{Trade-off between depth and reduction in CNOT gates}

If the DFS based method is not applied, then the number of layers of CNOT gates is equal to the number of color classes (as in Edge Coloring method). The maximum number of color classes is $\Delta + 1$ (as discussed in the previous section), and hence the maximum depth of the circuit is $\Delta + 1$ as well. Now, when the DFS based method is applied, the circuit can be divided into two disjoint sets of edges:
\begin{enumerate}
    \item The set of edges belonging to the DFS tree which can be optimized. The depth of this portion of the circuit is at most $n-1$ ($i.e.$, the depth of the DFS tree). Each of the operators corresponding to these edges contains a single CNOT gate only, and hence the number of CNOT gate layers is $n-1$ as well.
    \item The set of edges that do not belong to the DFS tree and hence are not optimized. The operators corresponding to these edges can be applied in any order, but after all the optimized edges. When removing the edges of the DFS tree, the degree of each vertex is reduced by at least 1. Therefore, the maximum degree of the remaining subgraph is at most $\Delta - 1$. Therefore, the depth of this portion of the circuit will be at most $\Delta$ (From Misra and Gries Algorithm). Each of the layer in this portion contains 2 CNOT gates, and hence the number of CNOT gate layers is $2\Delta$.
\end{enumerate}

Therefore, the maximum depth of the circuit after applying the DFS based optimization is $n-1 + \Delta$. In other words, the increase in depth due to this method is given by Eq.~(\ref{eq:cond2}).
\begin{eqnarray}
\label{eq:cond2}
    n-1 + \Delta - (\Delta + 1) = n - 2
\end{eqnarray}

Recall that the number of CNOT gates reduced due to the DFS method is always $n-1$. Therefore, from Eq.~(\ref{eq:cond1}) and $~(\ref{eq:cond2})$, we get
\begin{eqnarray}
\label{eq:condition}
n - 2 &\leq& \lambda \cdot (n-1) \nonumber \\
\Rightarrow \lambda &\geq& \frac{n-2}{n-1}
\end{eqnarray}

In Table~\ref{tab:lambda} we show the average value of $\lambda$ for some IBM Quantum~\cite{ibmquantum} devices, ranging from the comparatively more noisy \textit{ibmq\_melbourne} to the comparatively less noisy \textit{ibmq\_manhattan}.

\begin{table}[htb]
    \centering
    \caption{Average value of $\lambda$ for four IBM Quantum machines ~\cite{ibmquantum}}
    \begin{tabular}{|c|c|}
    \hline
    IBM Quantum devices & Avg value of $\lambda$\\
    \hline
    \textit{ibmq\_manhattan} & 3.6\\
    \hline
    \textit{ibmq\_montreal} & 2.47\\
    \hline
    \textit{ibmq\_sydney} & 3.35\\
    \hline
    \textit{ibmq\_melbourne} & 2.03\\
    \hline
    \end{tabular}
    \label{tab:lambda}
\end{table}

Note that the lower bound on $\lambda$, $\frac{n-2}{n-1}$ (Eq.~(\ref{eq:condition}), is always less than $1$ for all $n$. In the asymptotic limit, $\frac{n-2}{n-1} \rightarrow 1$. Thus, the proposed DFS based optimization method leads to a lower error probability on any quantum device for which $\lambda \geq 1$. Table~\ref{tab:lambda} readily shows that the IBM Quantum hardwares conform to this requirement.


\section{Results of simulation}
\label{sec:sim}

\begin{table*}[t]
    \centering
    \caption{Comparison of Max-Cut QAOA ansatz circuits post transpilation on \textit{ibmq\_manhattan}: (i) Traditional, (ii) Edge coloring and (iii) DFS based optimization}
    \begin{tabular}{|c|c|c|c|c|}
    \hline
    \multirow{2}{*}{Graph Family} & \multirow{2}{*}{\# qubits} & \multicolumn{3}{c|}{\# CNOT gates in Max-Cut QAOA ansatz circuit}\\
    \cline{3-5}
    & & Traditional  & Edge coloring & DFS\\
    \hline
    \multirow{6}{*}{Complete graph} & 10 & 90 & 85 & 81\\
    \cline{2-5}
    & 20 & 380 & 370 & 361\\
    \cline{2-5}
    & 30 & 870 & 855 & 841\\
    \cline{2-5}
    & 40 & 1560 & 1540 & 1521\\
    \cline{2-5}
    & 50 & 2450 & 2425 & 2401\\
    \cline{2-5}
    & 60 & 3540 & 3510 & 3481\\
    \hline
    \multirow{6}{*}{Erdos-Renyi ($p_{edge}$ = 0.8)} & 10 & 70 & 66 & 61\\
    \cline{2-5}
    & 20 & 302 & 292 & 283\\
    \cline{2-5}
    & 30 & 698 & 683 & 669\\
    \cline{2-5}
    & 40 & 1216 & 1197 & 1177\\
    \cline{2-5}
    & 50 & 1956 & 1931 & 1907\\
    \cline{2-5}
    & 60 & 2822 & 2792 & 2763 \\
    \hline
    \multirow{6}{*}{Erdos-Renyi ($p_{edge}$ = 0.6)} & 10 & 50 & 46 & 41\\
    \cline{2-5}
    & 20 & 234 & 225 & 215\\
    \cline{2-5}
    & 30 & 504 & 491 & 475\\
    \cline{2-5}
    & 40 & 960 & 940 & 921\\
    \cline{2-5}
    & 50 & 1504 & 1479 & 1455\\
    \cline{2-5}
    & 60 & 2114 & 2085 & 2055 \\
    \hline
    \multirow{6}{*}{Erdos-Renyi ($p_{edge}$ = 0.4)} & 10 & 36 & 31 & 27\\
    \cline{2-5}
    & 20 & 164 & 154 & 145\\
    \cline{2-5}
    & 30 & 362 & 348 & 333\\
    \cline{2-5}
    & 40 & 586 & 566 & 547\\
    \cline{2-5}
    & 50 & 950 & 925 & 901\\
    \cline{2-5}
    & 60 & 1468 & 1440 & 1409 \\
    \hline
    \end{tabular}
    \label{tab:cx_count}
\end{table*}

In this section we show the effect of our optimization methods on reducing the probability of error and the CNOT count of QAOA for Max-Cut. We first show that our proposed reduction is retained in the post transpilation circuit, which is executed on the quantum hardware. Next, we run our simulation with the noise model for \textit{ibmq\_manhattan} from IBM Quantum; this noise model corresponds to the actual noise in the IBM Quantum Manhattan device which has $65$ qubits and a Quantum Volume of $32$~\cite{ibmquantum}. For our simulation purpose, we have considered Erdos-Renyi graphs, where the probability that an edge exists between two vertices, $p_{edge}$, varies respectively from 0.4 to 1 (complete graph). The choice of Erdos-Renyi graph allows us to study the performance of these proposed methods for various sparsity of graphs.

The circuit that we construct is usually not executed as it is in the IBM Quantum hardware. It undergoes a process called transpilation in which

\begin{enumerate}[(i)]
    \item the gates of the circuit are replaced with one, or a sequence of, basis gates which are actually executed in the quantum hardware. The basis gates of the IBM Quantum devices are \{$CNOT$, $SX$, $X$, $R_z$ and Identity\} \cite{ibmquantum},
    
    \item the circuit is mapped to the underlying connectivity (called the coupling map) of the hardware \cite{bhattacharjee2020survey},
    
    \item the number of gates in the circuit is reduced using logical equivalence \cite{burgholzer2020advanced}.
\end{enumerate}

A natural question, therefore, is whether the reduction in CNOT gates is retained post transpilation. In Table~\ref{tab:cx_count} we show the number of CNOT gates in the post transpilation circuit for the \textit{ibmq\_manhattan} device as the number of vertices is varied from $10 - 60$ for each of the graph family considered. Our results readily show that the proposed optimization in the number of CNOT gates still hold good even in the transpiled circuit. Since the \textit{ibmq\_manhattan} device is a 65-qubits device, we show the results upto 60 qubits, but the results show that the trend will continue for higher qubit devices as well, when they become available.

\begin{figure*}[t]
     \centering
     \begin{subfigure}[b]{0.45\textwidth}
         \centering
         \includegraphics[scale=0.3]{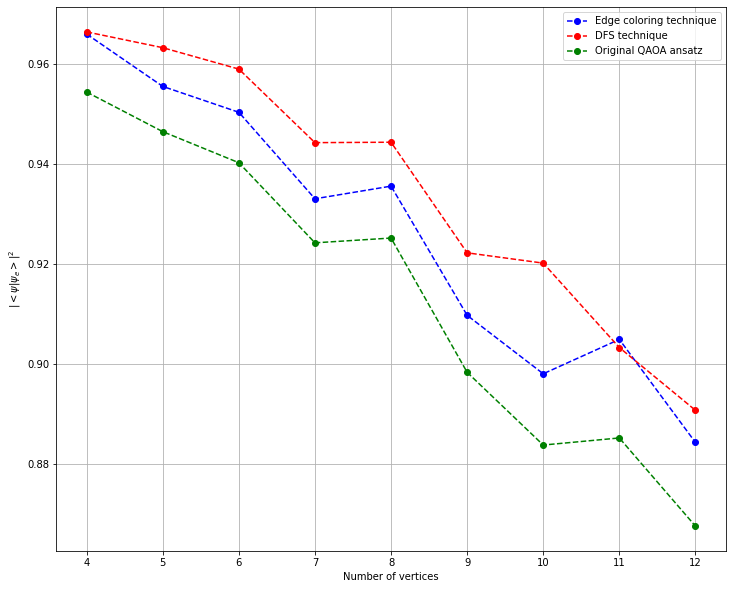}
         \caption{Erdos-Renyi graphs with $p_{edge} = 0.4$}
         \label{col}
     \end{subfigure}
     \hfill
     \begin{subfigure}[b]{0.45\textwidth}
         \centering
         \includegraphics[scale=0.3]{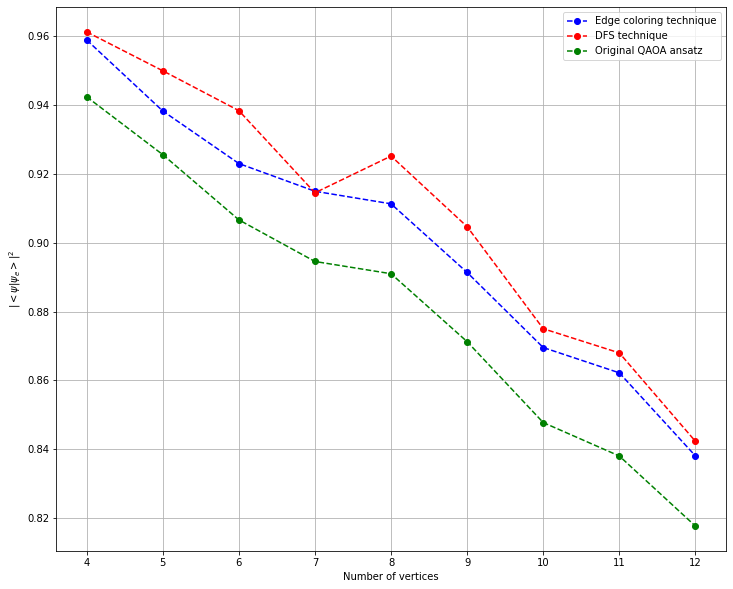}
         \caption{Erdos-Renyi graphs  with $p_{edge} =  0.6$}
         \label{dfs}
     \end{subfigure}
     \newline
     \begin{subfigure}[b]{0.45\textwidth}
         \centering
         \includegraphics[scale=0.3]{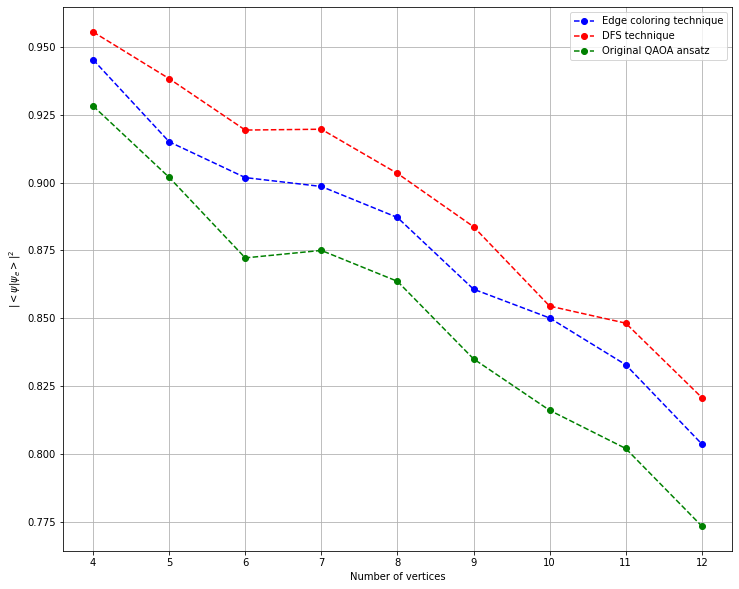}
         \caption{Erdos-Renyi graphs  with $p_{edge} = 0.8$}
         \label{col}
     \end{subfigure}
     \hfill
     \begin{subfigure}[b]{0.45\textwidth}
         \centering
         \includegraphics[scale=0.3]{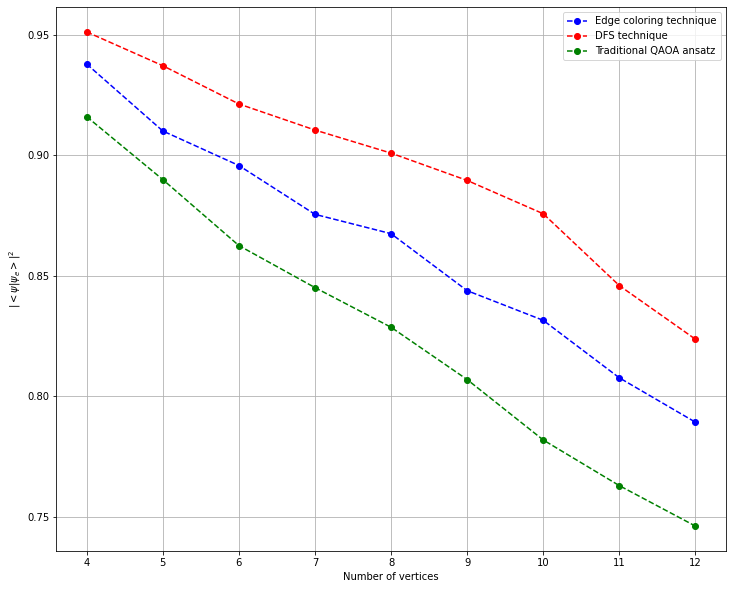}
         \caption{Complete graphs}
         \label{dfs}
     \end{subfigure}
    \caption{$|\braket{\psi|\psi_e}|^2$ for graphs of various sparsity: Erdos Renyi graphs ($p_{edge} = 0.4,~0.6,~ 0.8$) and complete graphs}
    \label{fig:prob_succ}
\end{figure*}

Let $\ket{\psi}$ be the state obtained from the noise-free (ideal) computation of the QAOA circuit, and the state obtained from noisy computation be $\ket{\psi_e}$. The probability of success of the noisy computation, then, is defined as
\begin{equation}
    \label{eq:graph_succ}
    P_{succ} = |\braket{\psi|\psi_e}|^2
\end{equation}
In Fig.~\ref{fig:prob_succ}(a) - (d) we plot $P_{succ}$ of the traditional QAOA ansatz, Edge Coloring based and the DFS based optimization method for Erdos-Renyi graphs, where $p_{edge}$, the probability that an edge exists between two vertices, varies from 0.4 to 1 (complete graph). The choice of Erdos-Renyi graph allows us to study the performance of these proposed methods for various sparsity of graphs. For each case we vary the number of vertices $n$ from 4 to 12. For each value of $n$ and $p_{edge}$, the results are averaged over 20 input graph instances, and each instance is an average of 100 randomly generated noisy circuits by the simulator model for \textit{ibmq\_manhattan} with noise. Our results readily show that the DFS based method outperforms both the Edge Coloring based method and the traditional QAOA in terms of lower error probability.

From Table~\ref{tab:cx_count}, and our simulation results in Fig.~\ref{fig:prob_succ}(a)-(d), we can infer that the DFS based optimization outperforms the Edge Coloring based optimization, which again, outperforms the traditional QAOA in the reduction in CNOT count, and the probability of error in the circuit in (i) the actual transpiled circuit that is executed on the quantum devices, as well as (ii) in realistic noisy scenario of quantum devices.

\section{Conclusion}
\label{sec:con}

In this paper we have proposed two methods to reduce the number of CNOT gates in the traditional QAOA ansatz. The Edge Coloring based method can reduce upto $\lfloor \frac{n}{2} \rfloor$ CNOT gates whereas the DFS based method can reduce $n - 1$ CNOT gates. While the former method provides a depth-optimized circuit, the latter method can increase the depth of the circuit. We analytically derive the constraint for which a particular increase in depth is acceptable given the number of CNOT gates reduced, and show that every graph satisfies this constraint. Therefore, these methods can reduce the number of CNOT gates in the QAOA ansatz for any graph. Finally, we show via simulation, with the \textit{ibmq\_manhattan} noise model, that the DFS based method outperforms the Edge Coloring based method, which in its turn, outperforms the traditional QAOA in terms of lower error probability in the circuit. The transpiler procedure of Qiskit maps a circuit to the underlying hardware connectivity graph, and some gates are reduced in this process. This transpiled circuit is executed on the real hardware. We show, with the \textit{ibmq\_manhattan} coupling map, that the reduction in the number of CNOT gates still holds post transpilation. Therefore, our proposed methods provide a universal way to an improved QAOA ansatz design. On a final note, all the calculations in this paper considers connected graph, but these carry over easily to disconnected graphs as well.

\section*{Acknowledgement}

We acknowledge the use of IBM Quantum services for this work. The views expressed are those of the authors, and do not reflect the official policy or position of IBM or the IBM Quantum team. In this paper we have used the noise model of \textit{ibmq\_manhattan}, which is one of IBM Quantum Hummingbird r2 Processors. 

\section*{Code Availability}

A notebook providing the code to generate the plots of Fig.~\ref{fig:prob_succ}(a)-(d) is available open source at \href{https://github.com/RitajitMajumdar/Optimizing-Ansatz-Design-in-QAOA-for-Max-cut}{https://github.com/RitajitMajumdar/Optimizing-Ansatz-Design-in-QAOA-for-Max-cut}.

\bibliographystyle{unsrt}
\bibliography{main}

\end{document}